\definecolor{gold}{HTML}{FDD017}
\definecolor{orange}{HTML}{FF8000}
\definecolor{deep sky blue}{HTML}{3BB9FF}
\definecolor{blue}{HTML}{0101DF}
\definecolor{red}{HTML}{DF0101}
\definecolor{co}{HTML}{2020A0}
\theoremstyle{theorem}
\newtheorem{thm}{Theorem}[section]
\newtheorem{prop}[thm]{Proposition}
\newtheorem{cor}[thm]{Corollary}
\theoremstyle{definition}
\newtheorem{rem}[thm]{Remark}
\newcommand{\seq}[3]{#1_{#2},\dotsc,#1_#3}
\newcommand{\floor}[1]{\lfloor#1\rfloor}
\newcommand{\FF}{\mathbb{F}}
\newcommand{\bs}[1]{\boldsymbol{#1}}
\newcommand{\diag}{\textrm{\upshape diag}}
\newcommand{\RS}{\textrm{RS}}
\newcommand{\GRS}{\textrm{GRS}}
\newcommand{\BCH}{\textrm{BCH}}
\newcommand{\wt}{\widetilde}
\begin{document}

\begin{frontmatter}

\title{On PGZ decoding of alternant codes}

\author{R. Farré, N. Sayols, and S. Xambó-Descamps}
\address{Universitat Politècnica de Catalunya\\
\textsf{\upshape\footnotesize rafel.farre@upc.edu},
\textsf{\upshape\footnotesize narcissb@gmail.com},
\textsf{\upshape\footnotesize sebastia.xambo@upc.edu}
}

\begin{abstract}
In this note we first review the classical
Petterson-Gorenstein-Zierler decoding algorithm for the class of
alternant codes (which includes Reed-Solomon, Bose-Chaudhuri-Hocquenghem
and classical Goppa codes), then we present an improvement
of the method to find the number of errors and
the error-locator polynomial, and finally we 
illustrate the procedure with several examples. 
In two appendices we sketch the main features of the 
system \cite{Sayols-Xambo-2017}
we have designed and developed for the computations.
\end{abstract}

\begin{keyword}
Alternant codes\sep 
RS codes\sep 
BCH codes\sep classical Goppa codes 

\MSC[2010] 
11T71\sep 
94B05\sep 
94B35\sep 
94B15
\end{keyword}

\end{frontmatter}

\section*{Introduction}
The Petterson-Gorenstein-Zierler decoding algorithm (PGZ for short)
was first developed for Reed-Solomon codes (RS), and later
applied to Bose-Chaudhuri-Hocquenghem codes (BCH).
In \cite{Xambo-2003}, two flavours of it were presented for alternant codes,
with due attention to the computational aspects. 
The main interest of working with the class of alternant codes is that 
it includes many interesting subclasses, like 
RS codes, BCH codes (the most relevant class of cyclic codes), 
and classical Goppa codes. 
The practical bonus of this
realization is that all these families of codes can be
constructed by specializing the general constructor of
alternant codes and, most fundamentally, that any
effective decoding algorithm for alternant codes is 
sufficient (and effective) for all those subclasses. 

In this note we present a natural improvement, both
conceptual and computational, of the PGZ algorithm.
The key point is that the output of the Gauss-Jordan 
reduction of a (Hankel-like) matrix constructed from the syndrome vector
gives directly and at the same time the number of errors and
the error-locator polynomial.

The organization is as follows. In the first section we
briefly review alternant codes. This includes details about
how the classes of codes just mentioned can be contructed
with special calls to the main constructor. 
The second section is devoted
to present the mathematical basis of the PGZ approach for
the decoding of alternant codes. Our improvement of PGZ is explained in
detail in the third section and in the fourth we provide
several examples. Finally
Appendix A contains listings of the key
Python functions that we have designed and coded to get 
clear implementations of the computations
and in the Appendix B we sketch the main
features of the Python package 
\cite{Sayols-Xambo-2017}
used to script the examples.

\paragraph{Notations and conventions}

If $q$ is a prime power, the finite field of $q$ elements (unique up to
isomorphism) is denoted $\FF_q$. It is a subfield of $\FF_{q^m}$ for all
positive integers $m$. The field $\FF_{q^m}$ can be constructed as the
quotient $\FF_q[X]/(f)$, where $f\in \FF_q[X]$ is any irreducible polynomial 
over $\FF_q$ of degree~$m$.

Given elements $\seq{\alpha}1n$ in a ring, 
we write $V_r(\seq{\alpha}1n)$ to denote the 
\textit{Vandermonde matrix} of $r$ rows associated to $\seq{\alpha}1n$.
In other words, its rows have the form $(\seq{\alpha^i}1n)$
for $0\le i\le r-1$. The determinant of the matrix
$V_n(\seq{\alpha}1n)$, which is called the 
\textit{Vandermonde determinant} of $\seq{\alpha}1n$, is equal to 
$\prod_{1\le i<j\le n}(\alpha_j-\alpha_i)$. In particular, 
it is non-zero when the $\alpha_k$ are distinct elements of a field.

\newcommand{\hd}{\textrm{hd}}\newcommand{\wgt}{\textrm{wt}}
Let $K$ be a finite field. A linear code of length $n$ defined over $K$ is a 
vector subspace $C\subseteq K^n$. If $C$ has dimension $k$, we say that
$C$ is an $[n,k]$ code. The quotient $k/n$ is called the \textit{rate} of $C$. 
The \textit{Hamming distance} 
$\hd(y,y')$ of $y,y'\in K^n$ is the number of indices $j\in \{1,\dotsc,n\}$ 
such that $y_j\ne y'_j$. The \textit{minimum distance} of a linear code
$C$, denoted $d$, is the minimum of the distances $\hd(x,x')$ for $x,x'\in C$, $x\ne x'$.
The number of non-zero entries of $y\in K$ is called the \textit{weight} of
$y$ and is denoted $\wgt(y)$. It is easy to see that $d$ 
is the minimum of the weights of 
non-zero elements of $C$. An $[n,k]$ code of minimum distance $d$ is said to be an $[n,k,d]$ code, or an $[n,k,d]_K$ if we need to write the base field $K$ explicitly.

\setcounter{section}{0}

\section{Essentials on alternant codes}\label{sec:AC}

Let $K=\FF_q$ and $\bar{K}=\FF_{q^m}$. Let
$\seq{\alpha}1n$ and $\seq{h}1n$ be elements of $\bar{K}$
such that $h_i,\alpha_i \neq 0$ for all $i$ and $\alpha_i\neq\alpha_j$ for all $i\neq j$.
Consider the matrix
\begin{equation}
\label{df-alt-matrix-short}
  H=V_r(\seq{\alpha}1n)\diag(\seq{h}1n)\in M_n^r(\bar{K}),
\end{equation}
that is,
\begin{equation}
\label{df-alt-matrix}
   H=\begin{pmatrix} h_1 &
   \dots & h_n\\ h_1\alpha_1 & \dots & h_n\alpha_n\\ \vdots & &
   \vdots \\ h_1\alpha_1^{r-1} & \dots & h_n\alpha_n^{r-1}
  \end{pmatrix}
\end{equation}
We say that $H$ is the
\emph{alternant control matrix}
of order $r$ associated with the vectors
\[
\bs{h}=(\seq h1n) \quad \textrm{and} \quad
\bs{\alpha}=(\seq{\alpha}1n).
\]
To make explicit that the entries of $\bs{h}$ and $\bs{\alpha}$
(and hence of $H$) lie in $\bar{K}$, we will say that
$H$ is defined over $\bar{K}$.

The $K$-code $A_K(\bs{h},\bs{\alpha},r)$
defined by the control matrix $H$ is the subspace
of $K^n$ whose elements are the vectors
$x$ such that $xH^T=0$.
Such codes will be called \emph{alternant codes}.
\label{alternantcodes}%
If we define the $H$\emph{-syndrome} of a vector
$y\in\bar{K}^n$ as $s=yH^{T}\in \bar{K}^r$, then
$A_K(\bs{h},\bs{\alpha},r)$ is just the
subspace of $K^n$ whose elements are the
vectors with zero $H$-syndrome.

\begin{prop}[Alternant bounds]\label{alternant-bound}
If $C=A_K(\bs{h},\bs{\alpha},r)$, then
\[n-r \ge \dim C \ge n-rm\]
and
\[d \ge r+1
\]
\emph{(minimum distance alternant bound)}.
\end{prop}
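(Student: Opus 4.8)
The plan is to establish the three inequalities separately, reducing each to a rank or an independence statement about the control matrix $H$ and then invoking the Vandermonde determinant formula recorded in the Notations. Throughout I regard $\bar K^r$ as a $K$-vector space of dimension $rm$ and work with the $K$-linear map $\phi\colon K^n\to\bar K^r$ given by $\phi(x)=xH^{T}$, whose kernel is by definition $C$. By rank--nullity, $\dim_K C=n-\dim_K(\operatorname{im}\phi)$, and since $\phi(e_j)$ is the $j$-th column of $H$, the image $\operatorname{im}\phi$ is the $K$-span of the $n$ columns of $H$. Both dimension bounds thus reduce to sandwiching $\dim_K(\operatorname{im}\phi)$ between $r$ and $rm$.

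The bound $\dim_K(\operatorname{im}\phi)\le rm$ is immediate since $\operatorname{im}\phi\subseteq\bar K^r$, and this gives $\dim C\ge n-rm$. For the other direction I would show $\dim_K(\operatorname{im}\phi)\ge r$: the first $r$ columns of $H$ form the square matrix $V_r(\seq{\alpha}1r)\diag(\seq h1r)$, whose determinant equals $\prod_{1\le i<j\le r}(\alpha_j-\alpha_i)\cdot\prod_{i=1}^{r}h_i$ and is nonzero by the Vandermonde formula together with the hypotheses that the $\alpha_i$ are distinct and the $h_i$ nonzero. Hence these $r$ columns are $\bar K$-linearly independent, and a fortiori $K$-linearly independent, so their $K$-span already has dimension $r$; this yields $\dim_K(\operatorname{im}\phi)\ge r$ and therefore $\dim C\le n-r$.

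For the distance bound I would use that $d$ is the least weight of a nonzero codeword, so it suffices to show that no nonzero $x\in C$ has weight at most $r$. If $x\in C$ is supported on a set $S$ of size $w=\wgt(x)$, the relation $xH^{T}=0$ reads $\sum_{j\in S}x_j\,(\text{$j$-th column of }H)=0$, a nontrivial $\bar K$-linear dependence among $w$ columns (the coefficients $x_j\in K\subseteq\bar K$ are nonzero on $S$). The same Vandermonde computation shows that any $r$ columns of $H$, hence any at most $r$ of them, are $\bar K$-linearly independent; so such a dependence forces $w\ge r+1$, giving $d\ge r+1$.

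The only genuine subtlety I anticipate is the bookkeeping between the two fields: the codewords live in $K^n$, whereas $H$ and all the determinant computations take place over $\bar K$. The point that makes everything cohere is the elementary observation that $\bar K$-linear independence of a family of vectors implies their $K$-linear independence, which lets the single non-vanishing Vandermonde determinant serve simultaneously for the upper dimension bound and for the minimum distance bound. Everything else is routine linear algebra.
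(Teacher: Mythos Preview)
Your argument is correct and is the standard proof of these bounds. The paper itself does not supply a proof at all --- it simply refers the reader to \cite{Xambo-2003}, p.~183 --- so there is no in-paper argument to compare yours against; what you have written is essentially what one finds in that reference.
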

\begin{proof}
See, for example, \cite{Xambo-2003}, p. 183.
\end{proof}

For the proofs of the statements in the remainder of this section,
we refer to \cite{Xambo-2003}, Section 4.1.

\subsubsection*{Reed-Solomon codes}

Given a list or vector $\bs{\alpha}$ of
distinct non-zero elements $\seq{\alpha}1n\in K$, the Reed--Solomon code
\[C=\RS(\bs{\alpha}, k)\subseteq K^n
\] 
is the subspace of $K^n$
generated by the rows of the Vandermonde matrix 
$V_{k}(\seq{\alpha}1n)$. It turns out that
\[\label{rs<alternant}%
\RS(\bs{\alpha},k)=
A_K(\bs{h},\bs{\alpha},n-k),
\]
where $\bs{h}=(\seq h1n)$ is given by
\begin{equation}
\label{fm-h-rs}
   h_i=1/\prod_{j\neq i}(\alpha_j-\alpha_i).
\end{equation}
Note that in this case $\bar{K}=K$, hence $m=1$,
and that the alternant bounds are sharp. Indeed,
we have $r=n-k$, hence $k=n-r$, while
$n-k+1\ge d$ (by the Singleton bound)
and $d\ge r+1=n-k+1$ by the minimum distance alternant bound. 
In other words, $C$ is MDS (maximum distance separable).

An RS code is called \textit{primitive} if
the $\seq{\alpha}1n$ are all non-zero elements of $K$.
In that case, a natural way to proceed is to generate
those elements as the powers $1,\alpha,\dotsc,\alpha^{q-2}$
of a primitive element $\alpha$ of $K$ and so the code is,
if its dimension is $k$, $\RS([1,\alpha,\dotsc,\alpha^{n-1}],k)$,
where $n=q-1$.

\paragraph{Generalized Reed-Solomon codes}
The vector $\bs{h}$ in the definition of the code
$\RS([\seq{\alpha}1n],k)$ as an alternant code is obtained
from $\bs{\alpha}$ by the formula (\ref{fm-h-rs}).
If we allow that $\bs{h}$ can be chosen possibly unrelated
to $\bs{\alpha}$, but still with components in $K$,
the resulting codes $A_K(\bs{h},\bs{\alpha},n-k)$
are called \emph{Generalized Reed--Solomon} (GRS) 
codes, and we will write $\GRS(\bs{h},\bs{\alpha},k)$ to denote
them. An argument as above shows that 
such codes have type $[n,k,n-k+~1]$.
Notice that the code $A_K(\bs{h},\bs{\alpha},r)$
is the intersection of the GRS code 
$A_{\bar{K}}(\bs{h},\bs{\alpha},r)$
with $K^n$.

\subsubsection*{BCH codes}

These codes are denoted $\BCH(\alpha,d,l)$, where
$\alpha \in \bar{K}$ and $d>0$, $l\ge 0$ are integers
(called the \textit{design minimum distance} and
the \textit{offset}, respectively). When $l=1$, 
we simply write $\BCH(\alpha,d)$ and say that the 
it is a \textit{strict} BCH code.
The good news here is that
\begin{equation}\label{BCH}
\BCH(\alpha,d,l) = A_K(\bs{h},\bs{\alpha},d-1),
\end{equation}
where $\bs{h}=(1,\alpha^l,\alpha^{2l},\dotsc,\alpha^{(n-1)l})$,
$\bs{\alpha}=(1,\alpha,\alpha^{2},\dotsc,\alpha^{(n-1)})$,
$n=\textrm{period}(\alpha)$.

If $\alpha$ is a primitive element of $K$, and hence $n=q-1$,
we have the equality
\[
\BCH(\alpha, n-k+1)=\RS([1,\alpha,\dotsc,\alpha^{n-1}], k).
\]

\subsubsection*{Classical Goppa codes}

Let $g\in \bar{K}[T]$ be a polynomial of degree
$r>0$ and let $\bs{\alpha}=\seq {\alpha}1n\in \bar{K}$ be
distinct non-zero elements such that $g(\alpha_i)\neq0$ for all $i$.
Then the \emph{classical Goppa code} over $K$
associated with $g$ and $\bs{\alpha}$,
which will be denoted $\Gamma(g,\bs{\alpha})$,
\label{classicalgoppacodes}%
can be defined as $A_K(\bs{h},\bs{\alpha},r)$, where
$\bs{h}$ is the vector
$(1/g(\alpha_1),\dotsc,1/g(\alpha_n))$.
Thus the minimum distance of $\Gamma(g,\bs{\alpha})$
is $\ge r+1$ and its dimension $k$
satisfies $n-rm\le k\le n-r$.
The minimum distance bound can be improved to $d\ge 2r+1$
in the case that $K=\FF_2$ and the roots of $g$ 
are distinct.

\section{The PGZ decoding approach}

Let $C=A_K(\bs{h},\bs{\alpha},r)$ be an alternant code.
Let $t=\floor{r/2}$, that is, the highest integer
such that $2t\le r$. For reasons that will become
apparent below, $t$ is called the
\textit{error-correction capacity} of $C$.

Let $x\in C$ (using a transmission chanel terminology, 
we say that it is the \emph{sent vector}) and $e\in\bar{K}^n$
(\emph{error vector}, or \emph{error pattern}).
Let $y=x+e$ (\emph{received vector}).
The goal of a decoder is to obtain
$x$ from $y$ and $H$ when $l:=\wgt(e)\le t$.
Henceforth we will assume that $l>0$.

If $e_m\ne 0$, we say that $m$ is an \textit{error position}.
Let $\{m_1,\dotsc,m_l\}$ be the error positions
and $\{e_{m_1},\dotsc, e_{m_l}\}$ the corresponding \textit{error values}.
The \textit{error locators} $\eta_1,\dotsc,\eta_l$ are defined by
$\eta_k = \alpha_{m_k}$. Since $\alpha_1,\dotsc,\alpha_n$ are distinct, the
knowledge of the $\eta_k$ is equivalent to the knowledge of the error positions.

The monic polynonial $L(z)$ whose roots are the error locators
is called the \textit{error-locator polynomial}. Notice that
\begin{equation}
\label{eq:error-locator-pgz}
 L(z)=\prod_{i=1}^{l}(z-\eta_i)=z^l+a_1 z^{l-1} + a_2
 z^{l-2} +\cdots+a_l,
\end{equation}
where $a_j=(-1)^j\sigma_j=\sigma_j(\eta_1,...,\eta_l)$
is the $j$-th elementary symmetric polynomial in the $\eta_i$
$(0\leq j\leq l)$.

The \textit{syndrome} of $y$ is the vector $s=yH^T$, 
say $s=(s_0,\dotsc,s_{r-1})$. Since
$xH^T=0$, we have $s=eH^T$. Inserting the definitions, we easily find that
\begin{equation}
\label{eq:syndrome-formula}
s_j=\sum_{i=0}^{n-1} e_i h_i \alpha_i^j =
\sum_{k=1}^{l} h_{m_k}e_{m_k}\alpha_{m_k}^j=
\sum_{k=1}^{l}h_{m_k}e_{m_k}\eta_{k}^j
\end{equation}

We will use the following notations:
\begin{equation}
\label{A-ell}
A_{l}=
\begin{pmatrix}
s_0 & s_1 & \ldots & s_{{l}-1} \cr s_1 & s_2 & \ldots & s_{{l}}   \cr
\vdots & \vdots & \ddots & \vdots \cr 
s_{{l}-1} & s_{{l}} & \ldots & s_{2{l}-2} \cr
\end{pmatrix}
\end{equation}
and the vector 
\begin{equation}
\label{vector-b}
\bs{b}_l=(s_l,\dotsc,s_{2l-1}).
\end{equation}

Next proposition establishes the key relation for computing 
the error-locator polynomial. 

\begin{prop}
\label{As*a=-b}
If $\bs{a}_l=(a_l,...,a_1)$ \emph{(see the fomula (\ref{eq:error-locator-pgz}))},
then 
\begin{equation}\label{eq:coeffs-L}
\bs{a}_lA_l+\bs{b}_l=0.
\end{equation}
\end{prop}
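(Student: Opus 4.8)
The plan is to verify the matrix identity componentwise, thereby reducing it to the single scalar fact that each error locator $\eta_k$ is a root of $L(z)$. First I would recast the claim in a symmetric form by restoring the leading coefficient $a_0=1$ of the monic polynomial $L$. Reading off the entries $(A_l)_{pq}=s_{p+q}$ with rows and columns indexed by $p,q\in\{0,\dots,l-1\}$, and noting that $\bs{a}_l=(a_l,\dots,a_1)$ picks up the coefficient $a_{l-p}$ against the $p$-th row, the $q$-th component of the row vector $\bs{a}_lA_l$ is $\sum_{p=0}^{l-1}a_{l-p}s_{p+q}=\sum_{i=1}^{l}a_i s_{q+l-i}$. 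Adding the $q$-th component $s_{l+q}$ of $\bs{b}_l$ (which is the $i=0$ term) collapses the left-hand side to $\sum_{i=0}^{l}a_i s_{q+l-i}$. Thus the whole proposition amounts to showing that the syndrome sequence obeys the linear recurrence $\sum_{i=0}^{l}a_i s_{q+l-i}=0$ for every $q=0,\dots,l-1$.

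Next I would substitute the syndrome formula (\ref{eq:syndrome-formula}), namely $s_j=\sum_{k=1}^{l}h_{m_k}e_{m_k}\eta_k^{j}$, into this sum and exchange the order of summation over $i$ and $k$. Factoring $\eta_k^{q}$ out of each inner sum leaves $\sum_{i=0}^{l}a_i\eta_k^{l-i}=L(\eta_k)$, which vanishes because $\eta_k$ is by definition a root of the error-locator polynomial (\ref{eq:error-locator-pgz}). Hence each component equals $\sum_{k=1}^{l}h_{m_k}e_{m_k}\eta_k^{q}L(\eta_k)=0$, and the identity follows.

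The only point requiring genuine care — and what I would flag as the main, if modest, obstacle — is making sure that every syndrome index appearing in the computation actually lies in $\{0,\dots,r-1\}$, since outside that range $s_j$ is undefined. The extreme indices $q+l-i$ occur at $i=l$, $q=0$, giving $0$, and at $i=0$, $q=l-1$, giving $2l-1$. Because the decoder operates under the hypothesis $l=\wgt(e)\le t=\floor{r/2}$, we have $2l\le 2t\le r$, so $2l-1\le r-1$ and every $s_{q+l-i}$ is a bona fide entry of the syndrome vector. With this range check secured, the remaining manipulation is purely formal and the proposition is established.
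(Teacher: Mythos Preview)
Your proof is correct and is essentially the same argument as the paper's, just run in the opposite direction: the paper starts from $L(\eta_i)=0$, multiplies by $h_{m_i}e_{m_i}\eta_i^{j}$ and sums over $i$ to reach the recurrence $s_{j+l}+a_1s_{j+l-1}+\cdots+a_ls_j=0$, whereas you expand the matrix equation componentwise first and then reduce to $L(\eta_k)=0$ via the syndrome formula. The explicit range check $0\le q+l-i\le 2l-1\le r-1$ you include is a nice touch that the paper leaves implicit.
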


\begin{proof}
Substituting $z$ by $\eta_i$ in the identity
\[
\prod_{i=1}^l(z-\eta_i)=z^l+a_1 z^{l-1}+...+a_l
\]
we obtain the relations
\[
\eta_i^l+a_1\eta_i^{l-1}+\cdots+a_l = 0,
\]
where $i=1,...,l$. Multiplying by $h_{m_i}e_{m_i}\eta_i^{j}$
and adding with respect to $i$, we obtain (using (\ref{eq:syndrome-formula})) the
relations
\[
s_{j+l}+a_1 s_{j+l-1}+\cdots+a_l s_j = 0,
\]
where $j=0,...,l-1$, and these relations are equivalent to
the stated matrix relation.
\end{proof}

\begin{rem}
In the equation (\ref{eq:coeffs-L}), the matrix $A_l$
turns out to be non-singular and hence it determines $\bs{a}_l$ (and $L(z)$) 
uniquely, namely $\bs{a}_l=-\bs{b}_lA_l^{-1}$. 
In next section we are going to establish this fact as a 
corollary of our Theorem \ref{thm:main}, whose main outcome is 
\textit{a fast solution of that equation}.
\end{rem}

The roots of $L$ only tell us in what positions the errors occur. To find the actual
value of the errors, we need the syndrome polynomial,
$\sigma(z)=s_0+s_1z+\cdots+s_{r-1}z^{r-1}$ and the polynomial
$\wt{L}(z)=1+a_1z+\cdots+a_lz^l$. Notice that the roots of $\wt{L}(z)$
are $1/\eta_1,\dotsc,1/\eta_l$.

\begin{thm}[Forney's formula]
Let $E(z)=\wt{L}(z) \sigma(z) \mod z^r$. Then for any
$m\in \{m_1,\dotsc,m_l\}$ we have
\begin{equation}\label{eq:Forney}
e_m = -\frac{\alpha_m\, E(1/\alpha_m)}{h_m {\wt{L}}^{'}(1/\alpha_m)},
\end{equation}
where ${\wt{L}}^{'}(z)$ denotes the derivative of $\wt{L}(z)$.
\end{thm}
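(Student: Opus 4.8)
The plan is to derive a closed form for the error-evaluator polynomial $E(z)$ and then to evaluate both $E$ and $\wt{L}'$ at the point $z=1/\alpha_m$, where the distinctness of the error locators collapses each sum to a single surviving term. First I would rewrite the syndrome polynomial by inserting the syndrome formula (\ref{eq:syndrome-formula}): since $s_j=\sum_{k=1}^l h_{m_k}e_{m_k}\eta_k^j$, interchanging the order of summation and summing the geometric series in $j$ gives
\[
\sigma(z)=\sum_{j=0}^{r-1}s_j z^j=\sum_{k=1}^l h_{m_k}e_{m_k}\,\frac{1-(\eta_k z)^r}{1-\eta_k z}.
\]
At the same time I would record that, because $\wt{L}(0)=1$ and its $l$ roots are exactly $1/\eta_1,\dotsc,1/\eta_l$ (distinct and nonzero, the $\alpha_i$ being distinct and nonzero), $\wt{L}$ factors as $\wt{L}(z)=\prod_{i=1}^l(1-\eta_i z)$.

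Multiplying the two expressions, the denominator $1-\eta_k z$ of the $k$-th summand of $\sigma$ cancels the matching factor of $\wt{L}$, so that
\[
\wt{L}(z)\sigma(z)=\sum_{k=1}^l h_{m_k}e_{m_k}\,(1-\eta_k^r z^r)\prod_{i\neq k}(1-\eta_i z).
\]
This is where the crucial degree count enters: each product $\prod_{i\neq k}(1-\eta_i z)$ has degree $l-1$, and since $l\le t=\floor{r/2}$ we have $l-1<r$; hence every term carrying the factor $\eta_k^r z^r$ is a multiple of $z^r$ and disappears upon reduction modulo $z^r$. What remains already has degree $\le l-1<r$, so
\[
E(z)=\sum_{k=1}^l h_{m_k}e_{m_k}\prod_{i\neq k}(1-\eta_i z).
\]

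Finally I would fix an error position $m=m_j$ (so $\alpha_m=\eta_j$) and evaluate at $z=1/\eta_j$. Every summand of $E$ with index $k\neq j$ contains the factor $1-\eta_j/\eta_j=0$ and vanishes, leaving $E(1/\eta_j)=h_{m_j}e_{m_j}\prod_{i\neq j}(1-\eta_i/\eta_j)$. Differentiating the product form of $\wt{L}$ and evaluating at the same point, the identical vanishing of the off-diagonal terms gives $\wt{L}'(1/\eta_j)=-\eta_j\prod_{i\neq j}(1-\eta_i/\eta_j)$. Forming the quotient, the common factor $\prod_{i\neq j}(1-\eta_i/\eta_j)$---nonzero precisely because the locators are distinct---cancels, and solving for $e_{m_j}$ produces the claimed formula (\ref{eq:Forney}). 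I expect the only genuinely delicate point to be the degree bookkeeping that licenses discarding the $z^r$-terms; once $E$ is in closed form, the two parallel evaluations and the cancellation of their shared factor are routine.
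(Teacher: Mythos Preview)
Your argument is correct and complete: the factorization $\wt{L}(z)=\prod_{i=1}^{l}(1-\eta_i z)$, the geometric-series rewriting of $\sigma(z)$, the cancellation yielding the closed form for $E(z)$, and the parallel evaluations at $z=1/\eta_j$ are all sound, and the degree inequality $l-1<r$ (from $l\le t=\lfloor r/2\rfloor$) does justify dropping the $z^r$-terms. The paper itself does not prove this statement in the text---it simply refers the reader to \cite{Xambo-2003}, sections 4.2 and 4.4---so there is no in-paper argument to compare against; what you have written is in fact the standard derivation one finds in that reference.
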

\begin{proof}
See, for example, \cite{Xambo-2003}, sections 4.2 and 4.4.
\end{proof}
Because of this result, the polynomial $E(z)$ is called the
\textit{error-evaluator} polynomial.

\section{Fast PGZ computations}

The main object considered in this Section is the matrix
(cf. \cite{Farre-2003})
\begin{equation}
\label{eq:matrixS}
S=\begin{pmatrix}
s_0 & s_1 & \cdots & s_{l-1} & s_l & \cdots & s_t\\
s_1 & s_2 & \cdots & s_{l} & s_{l+1} & \cdots & s_{t+1}\\
\vdots & \vdots &  & \vdots & \vdots &  & \vdots\\
s_{l-1} & s_l & \cdots & s_{2l-2} & s_{2l-1} & \cdots & s_{t+l-1}\\
\hline
\vdots & \vdots &  & \vdots & \vdots &  & \vdots\\
s_{t-1} & s_t & \cdots & s_{t+l-2} & s_{t+l-1} & \cdots & s_{2t-1}
\end{pmatrix}
\end{equation}
Note that $2t-1\le r-1$, so that all components are well defined.
Note also that the $l\times l$ submatrix at the upper left corner is
the matrix $A_l$ defined by Eq. (\ref{A-ell}) an that the column 
$(s_l,s_{l+1},\dotsc,s_{2l-1})^T$ to its right is the vector $\bs{b}_l$
defined by Eq. (\ref{vector-b}). 

In next Theorem we use the following notation:
$V_s=V_s(\eta_1,\dotsc,\eta_l)$. Thus the $i$-th row of $V_s$, for
$0 \le i\le s-1$, is the vector $(\eta_1^i,\dotsc,\eta_l^i)$.
We also write $D=\diag(h_{m_1}e_{m_1},\dotsc,h_{m_l}e_{m_l})$.

\begin{thm}
\label{thm:main}
$S=V_tDV_{t+1}^T$.
\end{thm}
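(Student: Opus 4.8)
The plan is to verify the claimed factorization $S = V_t D V_{t+1}^T$ by computing the $(i,j)$ entry of the right-hand side and matching it against the $(i,j)$ entry of $S$ as displayed in Eq.~(\ref{eq:matrixS}). Throughout I index rows by $0 \le i \le t-1$ and columns by $0 \le j \le t$, consistently with the way the rows of $V_s$ are indexed in the statement.

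First I would write down the entries of the three factors explicitly. Since the $i$-th row of $V_t$ is $(\eta_1^i,\dotsc,\eta_l^i)$, the $(i,k)$ entry of $V_t$ is $\eta_k^i$ for $1 \le k \le l$. Likewise the $(j,k)$ entry of $V_{t+1}$ is $\eta_k^j$, so the $(k,j)$ entry of its transpose $V_{t+1}^T$ is $\eta_k^j$. The matrix $D = \diag(h_{m_1}e_{m_1},\dotsc,h_{m_l}e_{m_l})$ is diagonal, so its $(k,k)$ entry is $h_{m_k}e_{m_k}$.

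Next I would carry out the matrix product. The $(i,j)$ entry of $V_t D V_{t+1}^T$ is obtained by summing over the single index $k$ that survives the diagonal contraction, giving
\begin{equation}
\label{eq:product-entry}
(V_t D V_{t+1}^T)_{ij} = \sum_{k=1}^{l} \eta_k^i \, (h_{m_k}e_{m_k}) \, \eta_k^j = \sum_{k=1}^{l} h_{m_k}e_{m_k}\, \eta_k^{i+j}.
\end{equation}
By the syndrome formula (\ref{eq:syndrome-formula}), the right-hand sum is exactly $s_{i+j}$. On the other hand, inspecting the display (\ref{eq:matrixS}), the $(i,j)$ entry of $S$ is precisely $s_{i+j}$, since each entry of $S$ has subscript equal to the sum of its row and column offsets. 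Hence the two matrices agree entrywise, which proves the claim.

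I do not anticipate a genuine obstacle here: the computation is a one-line contraction once the factors are written out. The only point requiring care is bookkeeping with the index ranges and offsets, namely confirming that every entry $s_{i+j}$ appearing in the product is actually a defined component of the syndrome. This is guaranteed by the observation already made after (\ref{eq:matrixS}) that $2t-1 \le r-1$: the largest subscript that arises is $i+j \le (t-1)+t = 2t-1 \le r-1$, so all the $s_{i+j}$ lie within the syndrome vector $(s_0,\dotsc,s_{r-1})$ and the factorization is well defined.
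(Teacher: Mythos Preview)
Your proof is correct and follows essentially the same approach as the paper: both compute the $(i,j)$ entry of $V_t D V_{t+1}^T$ directly as $\sum_{k=1}^{l} h_{m_k}e_{m_k}\eta_k^{i+j}$ and identify it with $s_{i+j}$ via the syndrome formula~(\ref{eq:syndrome-formula}). Your additional remark about the index range $i+j \le 2t-1 \le r-1$ is a nice bit of extra care, but otherwise the arguments coincide.
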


\begin{proof}
Let $0\le i \le t-1$ and $0\le j \le t$.
Then the $j$-th column of $DV_{t+1}^T$ is
the column vector $(h_{m_1}e_{m_1}\eta_1^j,\dotsc,h_{m_l}e_{m_l}\eta_l^j)^T$.
It follows that the element in row $i$ column $j$ of $V_tDV_{t+1}^T$
is $h_{m_1}e_{m_1}\eta_1^{i+j}+\cdots+h_{m_l}e_{m_l}\eta_l^{i+j}=s_{i+j}$
(by Eq.~(\ref{eq:syndrome-formula})).
\end{proof}

\begin{cor}
The rank of $S$ is $l$ and the matrix $A_l$ is non-singular.
\end{cor}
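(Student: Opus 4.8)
The plan is to read off both assertions directly from the factorization $S=V_tDV_{t+1}^{T}$ provided by Theorem~\ref{thm:main}, once I have checked that each of the three factors has the expected rank. First I would record the elementary facts that make the factors well behaved. The error locators $\eta_1,\dotsc,\eta_l$ are pairwise distinct, being the distinct values $\alpha_{m_1},\dotsc,\alpha_{m_l}$; and every diagonal entry $h_{m_k}e_{m_k}$ of $D$ is non-zero, since $h_{m_k}\neq 0$ by the standing hypothesis on the alternant matrix and $e_{m_k}\neq 0$ because $m_k$ is an error position. Hence $D$ is an invertible $l\times l$ matrix.

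Next I would exploit the inequality $l\le t$ coming from $\wgt(e)\le t$. This makes both Vandermonde factors $V_t$ (of size $t\times l$) and $V_{t+1}$ (of size $(t+1)\times l$) at least as tall as they are wide, and each contains the square block $V_l=V_l(\eta_1,\dotsc,\eta_l)$ in its top $l$ rows. By the formula for the Vandermonde determinant recalled in the Notations, $\det V_l=\prod_{1\le i<j\le l}(\eta_j-\eta_i)\neq 0$ precisely because the $\eta_i$ are distinct. Therefore $V_t$ and $V_{t+1}$ both have full column rank $l$.

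For the non-singularity of $A_l$ I would specialize the theorem to the upper-left corner. Keeping only the first $l$ rows of $V_t$ and of $V_{t+1}$ turns the identity $S=V_tDV_{t+1}^{T}$ into $A_l=V_lDV_l^{T}$, whence
\[
\det A_l=(\det V_l)^2\det D=\Bigl(\prod_{1\le i<j\le l}(\eta_j-\eta_i)\Bigr)^{2}\prod_{k=1}^{l}h_{m_k}e_{m_k}\neq 0,
\]
so $A_l$ is invertible. This settles the second claim and, incidentally, justifies the remark following Proposition~\ref{As*a=-b}, since $\bs{a}_l=-\bs{b}_lA_l^{-1}$ now makes sense.

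Finally, for $\operatorname{rank}S=l$ I would combine two bounds. Submultiplicativity of rank applied to the triple product gives $\operatorname{rank}S\le\min(\operatorname{rank}V_t,\operatorname{rank}D,\operatorname{rank}V_{t+1})=l$. Conversely, $A_l$ is an $l\times l$ submatrix of $S$ that we have just shown to be non-singular, so $\operatorname{rank}S\ge l$, and equality follows. I do not expect a genuine obstacle here; the only point that needs care is to keep the matrix dimensions straight and to invoke $l\le t$ at the right moment, so that the Vandermonde blocks are tall enough to attain full column rank and the square block $V_l$ is available.
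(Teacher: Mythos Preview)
Your proof is correct and follows essentially the same route as the paper: both arguments use the factorization $S=V_tDV_{t+1}^{T}$ to bound $\operatorname{rank}S\le l$ via the rank of $D$, and obtain the reverse inequality together with the non-singularity of $A_l$ from the specialization $A_l=V_lDV_l^{T}$ and the Vandermonde determinant. Your version is slightly more explicit about the role of the hypothesis $l\le t$ and the full column rank of the Vandermonde factors, but the underlying idea is the same.
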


\begin{proof}
Since $D$ has rank $l$, the rank of $S$ is at most $l$.
On the other hand, the theorem shows that
$A_l = V_lDV_l^T$ and therefore
\[
\det(A_l) = \det(V_l)^2\det(D)\ne 0. 
\]
Note that $\det(V_l)$ is the Vandermonde determinant
of $\eta_1,\dotsc,\eta_l$, which is non-zero because
the error locators are distinct.
\end{proof}

\begin{cor}
The Gauss-Jordan algorithm applied to the matrix $S$ returns a matrix that has the form
\begin{equation}
\label{eq:matrixS2}
\begin{pmatrix}
1 & 0 & \cdots & 0 & -a_l &  * \\
0 & 1 & \cdots & 0 & -a_{l-1} &  *\\
\vdots & \vdots &  & \vdots & \vdots  & \vdots\\
0 & 0 & \cdots & 1 & -a_1 &  *\\
\hline
\vdots & \vdots &  & \vdots & \vdots & \vdots
\end{pmatrix}
\end{equation}
where $*$ denotes unneeded values \emph{(if any)} and
the vertical dots below the horizontal line denote that all its elements
\emph{(if any)} are zero. This matrix gives at the same time $l$, the number of errors, 
and the coefficients of the error-locator polynomial.\qed
\end{cor}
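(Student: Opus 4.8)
The plan is to read the reduced row echelon form of $S$ off its column structure, using only the two facts already established—that $A_l$ is non-singular and that $\operatorname{rank} S=l$ (the Corollary above)—together with the Hankel recurrence behind Proposition~\ref{As*a=-b}. Write $\bs{c}_0,\dotsc,\bs{c}_t$ for the columns of $S$, so that the $i$-th entry of $\bs{c}_j$ is $s_{i+j}$ for $0\le i\le t-1$.

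First I would pin down the pivot columns. The upper-left $l\times l$ block of $S$ is exactly $A_l$ (see \eqref{eq:matrixS}), and it is non-singular; hence the top $l$ entries of $\bs{c}_0,\dotsc,\bs{c}_{l-1}$ are linearly independent, and a fortiori the full columns $\bs{c}_0,\dotsc,\bs{c}_{l-1}$ are independent. Since $\operatorname{rank} S=l$, these $l$ columns already span the column space, so Gauss-Jordan—selecting pivots from left to right—places its pivots precisely in columns $0,\dotsc,l-1$. Consequently the reduced matrix carries $I_l$ in its top-left $l\times l$ corner, every row $l,\dotsc,t-1$ is zero, and each column to the right of column $l$ is a non-pivot column recording the (unneeded) coordinates $*$. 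This already produces the whole shape of \eqref{eq:matrixS2} except for column $l$.

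To identify column $l$, recall that in the reduced form its entry in pivot-row $m$ is the coefficient of $\bs{c}_m$ in the unique expression of $\bs{c}_l$ through $\bs{c}_0,\dotsc,\bs{c}_{l-1}$. That expression comes from the identity inside the proof of Proposition~\ref{As*a=-b}: substituting $z=\eta_i$ in \eqref{eq:error-locator-pgz}, multiplying by $h_{m_i}e_{m_i}\eta_i^{\,p}$ and summing over $i$ gives, via \eqref{eq:syndrome-formula},
\[
s_{p+l}+a_1 s_{p+l-1}+\cdots+a_l s_p=0
\]
for every $p$ with $p+l\le r-1$. Taking $p=i$ for $i=0,\dotsc,t-1$ is legitimate because $i+l\le 2t-1\le r-1$, and these indices are exactly the $t$ rows of $S$; row by row the identity reads
\[
\bs{c}_l=-a_1\bs{c}_{l-1}-\cdots-a_l\bs{c}_0=-\sum_{m=0}^{l-1}a_{l-m}\,\bs{c}_m .
\]
Thus the coefficient of $\bs{c}_m$ is $-a_{l-m}$, so the reduced matrix has $-a_{l-m}$ in row $m$, column $l$; reading $m=0,1,\dotsc,l-1$ from top to bottom yields the column $(-a_l,-a_{l-1},\dotsc,-a_1)^T$ of \eqref{eq:matrixS2}, with zeros below the line. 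This exhibits $l$ as the number of pivots and recovers $a_1,\dotsc,a_l$, hence $L(z)$ via \eqref{eq:error-locator-pgz}.

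The one delicate point is guaranteeing that the relation among columns holds in all $t$ rows and not merely in the top block. Restricting $\bs{c}_l=\sum_m\lambda_m\bs{c}_m$ to the first $l$ rows gives only $A_l\bs{\lambda}=\bs{b}_l^{T}$, which by Proposition~\ref{As*a=-b} and the symmetry $A_l^{T}=A_l$ forces $\lambda_m=-a_{l-m}$; but this controls just the top $l$ entries of the combination. What upgrades it to an equality of full columns is precisely the Hankel recurrence above, valid for every admissible $p$ rather than only $p=0,\dotsc,l-1$. Equivalently, one may argue that $\bs{c}_l$ lies in the span of $\bs{c}_0,\dotsc,\bs{c}_{l-1}$ by the rank hypothesis, after which the non-singular block $A_l$ pins the unique spanning coefficients down to the $-a_{l-m}$ already found. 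Either route closes the argument; the only real work is the bookkeeping of signs and of the reversed order $-a_l,\dotsc,-a_1$.
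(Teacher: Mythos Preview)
Your argument is correct and is precisely the fleshing-out the paper leaves implicit: the corollary is stated with a bare \qed, relying on the immediately preceding facts that $A_l$ is non-singular and $\operatorname{rank}S=l$, together with the recurrence from Proposition~\ref{As*a=-b}, and you invoke exactly these. Your careful extension of the recurrence to all $t$ rows (valid since $i+l\le 2t-1\le r-1$) is the one detail worth making explicit, and you do so.
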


Putting together what we have learned in the last two sections, 
we obtain two algorithms to decode alternant codes, or rather two
variants of an algorithm. We call them \textsf{PGZ} and \textsf{PGZm},
for in essence they are due to Peterson, Gorenstein and
Zierler (see \cite{Peterson-Weldon-1972}). They share the same scheme
for finding the location of the errors, but differ in how
the error values are computed. PGZm is the simplest of the two,
as it relies mainly on linear algebra, whereas
PGZ relies on the finding the error evaluator polynomial
and using Forney's formula. 

In the descriptions that follow, \textit{Error} means ``a suitable decoding-error message''
and the function \textsf{GJ(S)}
returns the values $-a_l,\dotsc,-a_1$ of the matrix
(\ref{eq:matrixS2}) as a column vector
(this is a slightly modified form of the Gauss-Jordan procedure).
In detail, it works as follows:

\renewcommand{\ni}{\noindent}

\medskip
\ni
\textit{Improved PGZ}
\begin{enumerate}
\item
Get the syndrome vector, $s=(s_0,...,s_{r-1})=yH^{T}$.
If $s=0$, return $y$.
\item
Form the matrix $S$ as in the Eq. (\ref{eq:matrixS}).
\item 
Set $\boldsymbol{a}=-\textsf{GJ}(S)$ (Eq. (\ref{eq:matrixS2})).
After this we have $a_1,...,a_l$, hence also
the error-locator polynomial $L$.
\item 
Find the elements $\alpha_j$ that are roots of the polynomial
$L$. If the number of these roots is $<l$,
return \textit{Error}. Otherwise let
$\eta_1,...,\eta_l$ be the error-locators corresponding to the roots
and set $M=\{m_1,\dotsc,m_l\}$, where $\eta_i=\alpha_{m_i}$.
\item 
Let $\sigma(z)=s_0+s_1z+\cdots+s_{r-1}z^{r-1}$,
$\wt{L}(z)=1+a_1z+\cdots+a_sz^s$ and compute the error-evaluator 
polynomial by the formula
\[
E(z)=\wt{L}(z)\sigma(z)\mod z^r.
\]
\item 
Find the errors $e_{m}$, for all $m\in M$, using Forney's formula (equation
(\ref{eq:Forney})). If any of the values of $e_m$ is not in $K$, return \textit{Error}.
Otherwise return $y-e$.
\end{enumerate}

\begin{thm}
The algorithm PGZ corrects up to $t$ errors
\end{thm}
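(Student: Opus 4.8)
The plan is to trace the six steps of \textit{Improved PGZ} in order and to verify that, whenever $y=x+e$ with $x\in C$ and $l:=\wgt(e)\le t$, each step returns exactly the data it advertises, so that the final output is $y-e=x$. The observation driving the whole argument is that the genuine mathematical content has already been packaged into Proposition~\ref{As*a=-b}, Theorem~\ref{thm:main} and its two corollaries, together with Forney's formula~(\ref{eq:Forney}). What remains is therefore mostly bookkeeping: I must check that the procedure never needs to know $l$ in advance, and that on a legitimate error pattern of weight at most $t$ none of the ``return \textit{Error}'' branches is ever triggered.

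First I would dispose of Step~1. Since $x\in C$ we have $xH^T=0$, hence $s=yH^T=eH^T$. If the test $s=0$ succeeds, then $e$ itself lies in $C$; but by the minimum-distance alternant bound of Proposition~\ref{alternant-bound} we have $d\ge r+1\ge 2t+1$, so every nonzero codeword has weight $>t\ge l$, forcing $e=0$ and making the return of $y$ correct. Hence for the rest of the argument I may assume $0<l\le t$, so that $s\ne 0$ and the algorithm proceeds to Step~2. Because $2t\le r$, every syndrome entry occurring in the matrix $S$ of Eq.~(\ref{eq:matrixS}) is defined, so $S$ is well formed.

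The heart of the proof lies in Steps~2--3, and this is the one place I expect a real subtlety. The algorithm builds the full matrix $S$ and runs Gauss--Jordan \emph{without knowing} $l$. Here I would invoke the first corollary to Theorem~\ref{thm:main}: $S=V_tDV_{t+1}^T$ has rank exactly $l$, and its leading $l\times l$ block $A_l$ is nonsingular. The nonsingularity of $A_l$ is precisely what guarantees that elimination places its $l$ pivots in columns $0,\dots,l-1$ (those columns already have an invertible $l\times l$ restriction, and the total rank is only $l$, so every later column lies in their span). Consequently the number of pivot rows \emph{reveals} $l$, the rows below the pivots reduce to zero, and the reduced matrix takes the shape~(\ref{eq:matrixS2}); the entries $-a_l,\dots,-a_1$ in the column to the right of $A_l$ are pinned down by combining $\bs a_l=-\bs b_lA_l^{-1}$ from Proposition~\ref{As*a=-b} with the Hankel symmetry $A_l^T=A_l$. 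Thus Step~3 outputs the coefficients of the \emph{true} error-locator polynomial $L(z)=\prod_{i=1}^l(z-\eta_i)$.

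Finally I would assemble Steps~4--6. Since $L$ is the genuine error-locator polynomial, its roots among $\alpha_1,\dots,\alpha_n$ are exactly the $l$ error locators $\eta_i=\alpha_{m_i}$; in particular their number is $l$, so the count test in Step~4 passes and $M=\{m_1,\dots,m_l\}$ is recovered correctly. Forney's formula~(\ref{eq:Forney}) then returns, for each $m\in M$, the correct value $e_m$, and these values lie in $K$ because $e=y-x\in K^n$, so the final ``$e_m\notin K$'' test also passes. The computed error vector therefore coincides with $e$, and the algorithm returns $y-e=x$; the bound $d\ge 2t+1$ moreover ensures that $x$ is the unique codeword within distance $t$ of $y$, so the decoding is unambiguous.
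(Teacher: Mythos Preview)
Your proposal is correct and follows the paper's approach: the paper's own proof is the single sentence ``It is an immediate consequence of what we have seen so far,'' and what you have written is precisely a careful unpacking of that sentence, invoking Proposition~\ref{As*a=-b}, Theorem~\ref{thm:main} with its corollaries, and Forney's formula in turn. Your extra care in checking that the pivots of $S$ fall in the first $l$ columns (via the nonsingularity of $A_l$) and that the column read off is $-\bs a_l^T$ (via the Hankel symmetry) is exactly the content the paper leaves implicit in Corollary~3.3.
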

\begin{proof}
It is an immediate consequence of what we have seen so far.
\end{proof}
\begin{rem}
In step 5 of the algorithm we could use the alternative syndrome polynomial
$\wt{\sigma}(z)=s_0z^{r-1}+s_1z^{r-2}+\cdots+s_{r-1}$, find the alternative error
evaluator $E^*$ as the remainder of the division of
$L(z)\wt{\sigma}(z)$ by $z^r$ and then, in step 6, use the following
alternative Forney formula (\cite{Xambo-2003}, P.4.9):
\begin{equation}\label{eq:Forney-alt}
e_m = -E^*(\alpha_m)/h_m \alpha_m^r L'(\alpha_m).
\end{equation}
\end{rem}

\ni \textit{Algorithm PGZm}

\smallskip
The steps 5 and 6 of the PGZ algorithm can be compressed into 
a single step consisting in solving for
$e_{m_1},...,e_{m_l}$ the following system of linear equations:
\[
h_{m_1}e_{m_1}\eta_1^j+h_{m_2}e_{m_2}\eta_2^j+...+h_{m_l}e_{m_l}\eta_l^j=s_j
\ (0\le j\le l-1),
\]
which is equivalent to the matrix equation
\[
\begin{pmatrix} h_{m_1}  & h_{m_2}   & \ldots & h_{m_l}  \cr
   h_{m_1}\eta_1   & h_{m_2}\eta_2   & \ldots & h_{m_l}\eta_l  \cr
   h_{m_1}\eta_1^2 & h_{m_2}\eta_2^2 & \ldots & h_{m_l}\eta_l^2  \cr
   \vdots          & \vdots          & \ddots & \vdots          \cr
   h_{m_1}{\eta_1}^{l-1} & h_{m_2}{\eta_2}^{l-1} &
    \ldots & h_{m_l}{\eta_l}^{l-1} \cr
\end{pmatrix}
\begin{pmatrix}
e_{m_1} \cr e_{m_2} \cr e_{m_3} \cr \vdots \cr e_{m_l} \cr
\end{pmatrix}
=
\begin{pmatrix}
s_0 \cr s_1 \cr s_2 \cr \vdots \cr s_{l-1} \cr
\end{pmatrix}
\]
and then return $y-e$ (or \textit{Error} if one or more of the components
of $e$ is not in $K$).

\begin{rem}
Even with the improvements advanced in this note, 
in theory the PGZ and PGZm algorithms
cannot beat, for very large alternant codes, 
the Berlekamp-Massey-Sugiyama (BMS) algorithm (cf.  
\cite{Xambo-2003}, Section 4.3).
But they are comparable for the codes that
are feasible in practice. Indeed, the very construction of 
the alternant matrix is costly in time and space
and within the range of parameters that can usualy be afforded,
the efficiency of the PGZ or PGZm is comparable to that of BMS.
Let us also say that in some contexts, as for example in teaching, the PGZm
has the advantage that it is more straightfoward to explain and to implement,
the easiest case being RS codes over $K=\FF_p$, $p$ prime.
\end{rem}

\section{Examples}\label{sec:examples}
Here we are going to discuss the implementation of the algorithms 
using \cite{Sayols-Xambo-2017}, an how it works,
by considering some examples  
for each of the following classes: RS, GRS, BCH and
(classical) Goppa codes. 

In the code constructors described below, 
\textsf{h} and \textsf{a} stand for variables bound
to vectors of the same length $n$ with entries in a finite field;
\textsf{K} and \textsf{F}, to finite fields $K$ and $F$;
\textsf{r}, \textsf{k}, \textsf{d} and \textsf{l}, to integers $r$, $k$ 
$d$ and $l$ used as in the first two sections;
and \textsf{g} to a univariate polynomial with coefficients in a finite field.

\begin{itemize}
\item
\textsf{AC(h,a,r,K)}: This constructs the alternating code $A_K(\bs{h},\bs{\alpha},r)$. 
In the context of this note,
it is our main constructor, as the others (described below) are in fact defined as
special calls to \textsf{AC} (cf. Section \ref{sec:AC}). 
\item 
\textsf{RS(a,k)}: This yields the RS code $\RS(\bs{\alpha},k)$,
an $[n,k,n-k+1]$ code defined over the field to which
the elements of $\bs{\alpha}$ belong.

\item  
\textsf{GRS(h,a,k)}: As \textsf{RS}, but we have to supply $\bs{h}$ as a first
argument.
\item 
\textsf{PRS(F,k)}: The primitive RS code of the finite field $F$. It is defined as
\textsf{RS(a,k)}, but taking as
$\bs{\alpha}$ the list of non-zero elements of $F$.
\item 
\textsf{BCH(a,d,l)}: Supplies the code $\BCH(\alpha,d,l)$, where here 
\textsf{a} stands for an element $\alpha$ in a finite field.
\item 
\textsf{Goppa(g,a)}: The Goppa code $\Gamma(g,\bs{\alpha})$.
\end{itemize}

The code \textsf{C} obtained by any one of these constructors is a record-like structure
with fields that allow to get data from the code
or store new information about it. The labels of those fields end with an
underscore, but otherwise tend to mimic the mathematical symbols. For example,
\textsf{a = a\_(C)} and \textsf{H = H\_(C)} 
bind the variable \textsf{a} to the vector $\bs{\alpha}$ and the variable \textsf{H} to the alternant control matrix of $C$. 

\begin{rem}\label{rem:dimAC}
Except for RS and GRS codes, for which the parameters
can be deduced immediately from the data supplied to their constructors,
in general there is some work to be done to determine those not yet known. 
This work is rather straightforward when it comes to compute $k=\dim_K C$.
To that end, we need to construct a control matrix of $C$ defined over $K$.
This can be done in two steps: replace
each entry of $H$ by the column of is components in the
natural linear basis of $\bar{K}$ over $K$ 
(this yields an $K$-control matrix, but it usually has redundant rows) 
and then suppress all the rows that are linear combinations
of the previous ones. We have implemented these steps
by means of the functions \textsf{blow(H,K)} and \textsf{prune(M)}. 
The bottom line is that the dimension 
of $C$ is $n-r'$, where $r'$ is the number of rows of
\textsf{prune(blow(H,K))} or just \textsf{rank(blow(H,K))}. 
This is the method used 
to determine the dimension $k$ and the rate $R=k/n$
when we quote them. 
\end{rem}

A final comment before getting into the examples 
is that we can assume that the received vector is an error vector $e$
such that $\wgt(e)\le t$. The reason is the linearity of the code,
which implies that only the error vector is involved in the computations
of the error positions and values.

\newcommand{\lst}[1]{\par\vspace{5pt}\textsf{\small #1}\par\vspace{5pt}}
\renewcommand{\ni}{\noindent}

\paragraph{\bfseries RS}
Take $K=\FF_{13}$ and construct \textsf{C = PRS(K,8)}, the primitive RS of $K$ of dimension $k=8$. It has length $n=12$, so its rate is $2/3$, and the minimum distance
is $d=n-k+1=5$, so that it corrects at least two errors. First let us consider
the case of one error. Suppose the received vector is
\lst{e = [0, 0, 0, 0, 3, 0, 0, 0, 0, 0, 0, 0]}
\noindent Then the decoder call \textsf{PGZ(e,C)} (or \textsf{PGZm(e,C)}) yields
{\small 
\begin{verbatim}
   PGZ: Error positions [4], error values [3] :: Vector[Z13]
   [0, 0, 0, 0, 0, 0, 0, 0, 0, 0, 0, 0] :: Vector[Z13]
\end{verbatim}
}\ni
This means that \textsf{PGZ} finds that there is a single error
at the index 4 (5th element of the vector) and that its value is 3,
and then outputs (correctly) the decoded vector.
Let us go over the steps followed by \textsf{PGZ} in detail.
The control matrix is \textsf{H = H\_(C)}:
{\small
\begin{verbatim}
   [[1  2  4  8  3  6 12 11  9  5 10  7]
    [1  4  3 12  9 10  1  4  3 12  9 10]
    [1  8 12  5  1  8 12  5  1  8 12  5]
    [1  3  9  1  3  9  1  3  9  1  3  9]] :: Matrix[Z13]
\end{verbatim}
}\ni
Then the syndromy vector is given by \textsf{s = y*transpose(H)},
\lst{[9, 1, 3, 9]\textrm{,}}
\ni and the matrix \textsf{S = hankel\_matrix(s)} is
{\small 
\begin{verbatim}
   [[9 1 3]
    [1 3 9]] :: Matrix[Z13]
\end{verbatim}
}
\ni
This matrix has rank 1, as the second row is 3 times the first. 
This means that there is one error and
that the error-locating polynomial is $L(z)=z-1/9=z-3$. To find
the error position, we have to look at the position of 3 in the
vector $\bs{\alpha}$ used to construct \textsf{C}, which is \textsf{a\_(C)}:
\lst{[1, 2, 4, 8, 3, 6, 12, 11, 9, 5, 10, 7] :: Vector[Z13]}
\ni Thus the position is indeed the one in which the error occurred.
To find the error value, first we have to calculate the 
error evaluator 
\[
E(z) = \sigma(z)\wt{L}(z) \mod z^r = 
(9+ z+ 3z^2+ 9z^3)(-3z+1) \mod z^6,
\]
which turns out to be the constant $9$.
Forney's formula for the error value is
$-\alpha_4 E(1/\alpha_4)/h_4 \wt{L}^{'}(1/\alpha_4)=-3\cdot 9 /3\cdot(-3)=3$
(for in this case $\bs{h}=\bs{\alpha}$), which is the error value. 

Now we are going to repeat, with less detail, the case of 2 errors.
Suppose the received vector is
\lst{y = [0, 0, 0, 0, 3, 0, 0, 0, 0, 7, 0, 0]}
\noindent Then the decoder call \textsf{PGZ(y,C)} (or \textsf{PGZm(y,C)}) yields
{\small 
\begin{verbatim}
   PGZ: Error positions [4, 9], error values [3, 7] :: Vector[Z13]
   [0, 0, 0, 0, 0, 0, 0, 0, 0, 0, 0, 0] :: Vector[Z13]
\end{verbatim}
}
The syndromy vector is 
\lst{[5, 7, 7, 3]\textrm{,}}
\ni and the matrix \textsf{S = hankel\_matrix(s)} is
{\small 
\begin{verbatim}
   [[5 7 7]
    [7 7 3]] :: Matrix[Z13]
\end{verbatim}
}
\ni 
Since it has rank 2, there have been 2 errors.
In this case the Gauss-Jordan reduction produces 
$L(z)=z^2+5z+2=(z-3)(z-5)$. Since the roots 3 and 5 occupy
the positions 5 and 10 in $\bs{\alpha}$, we see that
the indices of the computed error positions are correct.
For the error values we have to apply Forney's formula
to 3 and 5 ($\alpha_4$ and $\alpha_9$). 
We have $\wt{L}(z)=2z^2+5z+1$, $\wt{L}'(z)=4z+5$,
$\sigma(z)=5+7z+7z^2+3z^3$, and 
$E(z)=\wt{L}(z)\sigma(z) \mod z^4=6z+5$. Then
the error corresponding to, for example, the second root is
\[
-5 \cdot E(1/5)/5 \cdot \wt{L}'(1/5)=-E(8)/\wt{L}'(8)
=-(48+5)/(32+5)=-1/-2=7.
\]

For another example, if we take \textsf{F = Zn(31)} and \textsf{k = 20}, then
\textsf{C = PRS(F,k)} is a $[30,20,11]$ code. This corrects up to
5 errors and its rate is $2/3$. 
This capability is illustrated in the following listing:

{\footnotesize
\begin{verbatim}
   e = rd_error_vector(F,n,5)  # this creates a ramdom 5-error pattern
   >>[0,0,0,0,0,0,0,0,0,14,0,0,0,28,26,0,0,0,0,23,0,0,16,0,0,0,0,0,0,0]
         :: Vector[Z31]
   PGZ(e,C)
   >>PGZ: Error positions [9,13,14,19,22], 
          error values [14,28,26,23,16] :: Vector[Z31]
\end{verbatim}
}

\paragraph{\bfseries BCH}
Take $K=\FF_2$ and
$F=\FF_{32}$, generated by $\alpha$ such that
$\alpha^5=\alpha^2+1$. Let $C=\BCH(\alpha,7)$.
This is a binary code of length $n=31$ (the
order of $\alpha$) that corrects up to 3 errors.
In our system it can be constructed as follows:
{\small 
\begin{verbatim}
   K = Zn(2)
   [F,a] = extension(K,[1,0,0,1,0,1],'a','F')
   C = BCH(a,7)
\end{verbatim}}
\ni Its dimension is 16 (so its rate is $16/31>1/2$), as shown by the following
command:
{\small 
\begin{verbatim}
   n - rank(blow(H_(C),K))
   >> 16
\end{verbatim}
}
Now consider, for example, the weight 3 error pattern \textsf{e}:
\lst{%
[0, 0, 0, 0, 0, 1, 0, 0, 0, 0, 0, 0, 0, 0, 0, 0, 0, 0, 0, 1, 0, 0, 0, 0, 0, 0, 0, 0, 1, 0, 0]
}\ni
Then the call \textsf{PGZ(e,C)} outputs 
\lst{PGZ: Error positions [5, 19, 28], error values  [1, 1, 1] :: Vector[K]}
\ni (Note that the only possible error value over $\FF_2$ is 1, and that 
therefore in this case the decoder only needs to care about 
error location.)
The matrix $S$ computed in this case is (instead of $\alpha^j$ we write $j$):
{\small 
\begin{verbatim}
    [[22, 13, 14, 26]
     [13, 14, 26, 19]
     [14, 26, 19, 28]]
\end{verbatim}
}\ni
which gives $l=3$.

The code $C$ also corrects up to 3 errors
when considered as an $F$-code (which is a GRS code over $F$). 
For example, if
\lst{%
e=[0,0,0,0,0,0,0,0, $\alpha^5$, 1, 0,0,0,0,0,0,0,0,0,0,0,0,0,0,0,0, 
$\alpha^{19}$, 0,0,0,0]
}\ni
then the output contains
{\small 
\begin{verbatim}
   PGZ: Error positions [8, 9, 26], 
        error values [a**5, 1, a**19] :: Vector[F]
\end{verbatim}}
\ni together with the correct decoded vector. With the
same conventions as before, the matrix $S$ gives $l=3$:
{\small 
\begin{verbatim}
    [[16,  0, 30, 14]
     [ 0, 30, 14, 25]
     [30, 14, 25, 28]]
\end{verbatim}
}

Here is a more involved example. Take $K=\FF_3$ and
$F=\FF_{343}$, generated by $\alpha$ such that
$\alpha^5=\alpha+2$. The element $\alpha$ is 
primitive and $\beta=\alpha^2$ has order $n=(343-1)/2=121$. 
It follows that the code $C=\BCH(\beta,11)$
has length $n$ and that it corrects at least 5 errors.
In the \textsf{PyECC} system it can be constructed as folows:
{\small 
\begin{verbatim}
   K = Zn(3)
   f=get_irreducible_polynomial(K,5,'X')  # X**5-X+1
   [F,a] = extension(K,f,'a','F')
   C = BCH(a**2,11)
\end{verbatim}}\ni
In addition, the command \textsf{n - rank(blow(H\_(C),K))}
yields that its dimension is $121-35 = 86$, so that
its rate is $86/121>7/10$.

Consider the error pattern of weight 5 (where \textsf{0$_{k}$} denotes
\textsf{0} repeated \textsf{k} times)
\lst{%
e = [0$_2$, 1, 0$_7$, 1, 0$_{22}$, 2, 0$_6$, 2, 0$_{72}$, 1, 0$_{7}$] :: Vector[Z3]
}
Then we have:
{\small 
\begin{verbatim}
   PGZ(e,C)
   >>PGZ: Error positions [2, 10, 33, 40, 113],
          error values [1, 1, 2, 2, 1] :: Vector[F5]
\end{verbatim}
}

\paragraph{\bfseries Classical Goppa}
Consider the field $\FF_{25}$ generated over $\FF_5$ by $x$
such that $x^2=2$. Let $g=T^6+T^3+T+1$ and make a list $\bs{\alpha}$
of the elements $t\in \FF_{25}$ such that $g(t)\ne 0$. Then it turns out that
$\bs{\alpha}$ has length $n=19$
($g$ has four simple roots and one double root in $\FF_{25}$)
and that $C=\Gamma(g,\bs{\alpha})$ 
corrects up to 3 errors. 

{\small
\begin{verbatim}
   F5 = Zn(5)
   # Creation of F25, with generator x
   [F25,x] = extension(F5,[1,0,-2],'x','F25') 
   # Creation of the polynomial ring F25[T]
   [A,T] = polynomial_ring(F25,'T')

   g = T**6 + T**3 + T +1
   a = Set(F25)[1:]		# The non-zero elements of F25 
   a = [t for t in a if evaluate(g,t)!=0]
   C = Goppa(g,a)

   # generate a random error pattern of weight 3
   e = rd_error_vector(Z5,n,3)
   >> e = [0,1,0,0,0,3,0,4,0,0,0,0,0,0,0,0,0,0,0] :: Vector[Z5]

   # Use the PGZ decoder for C
   PGZ(e,C)
   >>PGZ: Error positions [1,5,7], error values [1,3,4] :: Vector[K]
     [0,0,0,0,0,0,0,0,0,0,0,0,0,0,0,0,0,0,0] :: Vector[Z5]
\end{verbatim}
}\ni
The dimension of $C$ (given by 
\textsf{n - rank(blow(H\_(C),F5)}) is $19-12=7$, and its rate
$7/19>1/3$.

\medskip
Here is another illustration, with $n=76$:
\lst{%
e = rd\_error\_vector(F3,n,5) $>\!\!>$ [0$_{10}$, 2, 0$_{35}$, 2, 0$_{9}$, 1, 0$_6$, 1, 0$_3$, 2, 0$_{10}$]
}
\vspace{-2ex}
{\small 
\begin{verbatim}
   F3 = Zn(3)
   f = get_irreducible_polynomial(F3,4,'X') # X**4 + X + 2
   [F81,x] = extension(F3,f,'x','F81')      # x is primitive     
   g = X**2 * (X-1)**4 * (X-2)**4           
   a = Set(F81)[3:]   # g does not vanish on any of these values 
   n = len(a)         # 81 - 3 = 76                  
   C = Goppa(g,a)     # code of length 76
   k = n - rank(blow(H_(C),F3))   # k = 76-32 = 44 
   PGZm(e,C)
   >> PGZm: Error positions [10, 46, 56, 63, 67], 
           error values [2, 2, 1, 1, 2] :: Vector[F3]
\end{verbatim}
}

\appendix

\section{The function PGZm}
For the sake of brevity, here we list and comment the \textsf{PGZm(y,C)} function.
Its code, together with the code of 
\textsf{PGZ(y,C)}, can be accessed by browsing the text file
\href{https://mat-web.upc.edu/people/sebastia.xambo/Listings-FSX/}{Listings-FSX.pyecc}.
The parameter \textsf{y} is supposed to be the received vector in a transmission
using the alternant code \textsf{C}. We have seen that the value of expressions
\textsf{a\_(C)} and \textsf{H\_(C)} is the vector $\bs{\alpha}$ and the control
matrix $H$. Similarly, the values of the expressions
\textsf{K\_(C)}, \textsf{h\_(C)}, \textsf{r\_(C)}
are the field over which \textsf{C} is defined, the vector $\bs{h}$ and
the number of rows of $H$, respectively.

{\small 
\begin{verbatim}
def PGZm(y,C):
    if isinstance(y,list): y = vector(K_(C),y)
    if not isinstance(y,Vector_Element):
        return "PGZm: Argument is not a vector"
    h = h_(C)
    if len(y) != len(h):
        return "PGZm: Vector argument has wrong length"
    r = r_(C); alpha = a_(C); H = H_(C); K = K_(C)
    s = y*H.transpose()   
    if is_zero(s): 
        print("PGZm: Input is a code vector")
        return y
    S = hankel_matrix(s)
    c0 = S[:,0]   # keep the first column of S
    a = -GJ(S); l = len(a)
    a = reverse(a.to_list())   
    K1 = K_(H)
    [_,z] = polynomial_ring(K1,'z','K1[z]')
    L = hohner([1]+a,z)
    R = [s for s in alpha if evaluate(L,s)==0]
    if len(R) < l:
        return "PGZm: Defective error location"
    M = [alpha.to_list().index(r) for r in R]
    h1 = [h[m] for m in M]
    V = alternant_matrix(h1,R,l)
    v = c0[:l]
    V1 = splice(V,v)
    w = transpose(GJ(V1))
    for t in w:
        t = pull(t,K)
        if not belongs(t,K): 
            return "PGZ: error value not in base field"
    show("PGZm: Error positions {}, error values {}".format(M, w))
    for j in range(len(M)):
        y[M[j]]-=w[j]
    return pull(y,K)
\end{verbatim}
}

\section{The PyECC system}

Initially (October 2015) the idea that launched 
\cite{Sayols-Xambo-2017}
was to match the functionality of the CC system
developed to deal with the computational tasks related to the book \cite{Xambo-2003},
but it became soon clear that we could go beyond that system in several
directions. The aim of the undertaking is to
produce a Python package (PyECC) enabling the construction, coding and decoding of
error-correcting codes and make it freely available for teachers and
researchers. The current state of the project 
is documented at \href{https://mat-web.upc.edu/people/sebastia.xambo/PyECC.html}{PyECC}

\end{document}